\documentclass[10pt,conference]{IEEEtran}

\usepackage{amsfonts}
\usepackage{amssymb}
\usepackage{mathrsfs}
\usepackage{verbatim}
\usepackage{times}
\usepackage{subfigure}
\usepackage[centertags]{amsmath}
\usepackage{graphicx}
\vfuzz2pt 
\newtheorem{thm}{\bf Theorem}

\newtheorem{defn}{{Definition}}

\newtheorem{eg}{Example}[section]

\begin{document}

\title{Noncoherent Low-Decoding-Complexity Space-Time Codes for Wireless Relay Networks}

\author{
\authorblockN{G. Susinder Rajan}
\authorblockA{ECE Department \\
Indian Institute of Science \\
Bangalore 560012, India \\
susinder@ece.iisc.ernet.in}
\and
\authorblockN{B. Sundar Rajan}
\authorblockA{ECE Department \\
Indian Institute of Science \\
Bangalore 560012, India \\
bsrajan@ece.iisc.ernet.in}
}

\maketitle

\begin{abstract}
The differential encoding/decoding setup introduced by Kiran \emph{et al}, Oggier \emph{et al} and Jing \emph{et al} for wireless relay networks that use codebooks consisting of unitary matrices is extended to allow codebooks consisting of scaled unitary matrices. For such codebooks  to be used in the Jing-Hassibi protocol for cooperative diversity, the conditions that need to be satisfied by the relay matrices and the codebook are identified. A class of previously known rate one, full diversity, four-group encodable and four-group decodable Differential Space-Time Codes (DSTCs) is proposed for use as Distributed DSTCs (DDSTCs) in the proposed set up. To the best of our knowledge, this is the first known low decoding complexity DDSTC scheme for cooperative wireless networks.
\end{abstract}

\section{Introduction}
\label{sec1}
Recently \cite{JiH4,RaR1}, the idea of space-time coding for colocated MIMO channels has been applied in the setup of cooperative wireless networks in the name of distributing space-time coding, wherein coding is performed across users and time. This strategy provides each user a diversity order equal to the number of cooperating terminals even though all the users are only equipped with a single antenna. The diversity thus achieved is called as cooperative diversity. However, such strategies require that the destination have complete knowledge of the fading coefficients from all the users to itself as well as that of the fading coefficients between users. But, to obtain the knowledge of the fading coefficients between the users at the destination requires additional resources. To solve this problem, in \cite{KiR2}, Kiran \emph{et al} have proposed a differential encoding/decoding setup for cooperative wireless networks that does not require the knowledge of fading coefficients between the users at the destination. Such codes were named as partially coherent distributed space-time codes in \cite{KiR2}. However, in a recent work \cite{OgH2}, it has been shown that the same strategy of \cite{KiR2} offers full diversity with a suboptimal receiver that does not require the knowledge of any of the fading coefficients. In \cite{JiJ}, Jing \emph{et al} have proposed a differential encoding/decoding setup for cooperative wireless networks which is more general than the setup proposed in \cite{KiR2} and they have also provided few code constructions. We call the class of Differential Space-Time Codes (DSTCs) used in a distributed manner for cooperative diversity  as Distributed DSTCs (and denote by DDSTCs) to differentiate them from DSTCs for colocated MIMO systems. The problem of designing DDSTCs is more challenging than the DSTC design problem for colocated MIMO channels since in this scenario we have additional constraints to be satisfied which are due to the cooperative diversity protocol. However, both \cite{KiR2} as well as \cite{OgH2} do not address the problem of designing low encoding and low decoding complexity DDSTC. In \cite{JiJ}, a solitary example of the Alamouti code has been proposed but a general construction targeting low decoding complexity is not available. This issue gains significant importance especially if the number of cooperating terminals is large, which is quite expected in applications like wireless sensor networks. The contributions of this paper are as follows:
\begin{itemize}
\item The differential encoding/decoding setup introduced by Kiran \emph{et al} \cite{KiR2}, Oggier \emph{et al} \cite{OgH2} and Jing \emph{et al} \cite{JiJ} for wireless relay networks that use codebooks consisting of unitary matrices is extended to allow codebooks consisting of scaled unitary matrices.
\item When the codebook of scaled unitary matrices is obtained from a design with proper choice of signal sets for the variables and  used in the Jing-Hassibi protocol \cite{JiH4} for cooperative diversity, the conditions involving the relay matrices and the codebook that need to be satisfied are identified. 

\item In \cite{RaRisit1}, a class of rate one, full diversity, four-group encodable and four-group decodable DSTCs is constructed for $N_T=2^{\lambda}$ transmit antennas using matrix representation of extended Clifford algebras and appropriate choice of signal sets for colocated MIMO communication. We prove using algebraic techniques that these codes satisfy the conditions mentioned above and hence are usable as DDSTCs. 

\item To the best of our knowledge, this is the first known low decoding complexity DDSTC scheme for cooperative wireless networks.
\end{itemize}

The rest of the paper is organized as follows:  Section \ref{sec2} introduces the system model for the cooperative diversity scheme employing a differential scaled unitary matrix code book at the source. In Section \ref{sec3} the notion of $g$-group encoding cum decoding for the system model of Section \ref{sec2} is given and the problem of DDSTC design is formally presented. The extra conditions on the code structure imposed by the cooperative diversity protocol of \cite{KiR2}-\cite{OgH2} are then described. Also, we briefly describe the DSTCs from extended Clifford algebras developed in \cite{RaRisit3} for coherent colocated MIMO communication. In Section \ref{sec4} we show that the DSTCs from extended Clifford algebras of \cite{RaRisit3} satisfy the conditions needed for them to be a suitable code as DDSTCs for cooperative diversity systems with scaled unitary codebook. 

\section{System model and differential scaled unitary matrix codebook}
\label{sec2}  
In this section, we briefly explain the distributed differential encoding/decoding setup proposed in \cite{KiR2,JiJ,OgH2} with a slight modification. We then highlight the various requirements for this coding problem.

We consider a network consisting of a source node, a destination node and $R$ other relay nodes which aid the source in communicating information to the destination. All the nodes are assumed to be equipped only with a single antenna and are half duplex constrained, i.e., a node cannot transmit and receive simultaneously. The channel path gains from the source to the $i$th relay, $f_i$ and from the $j$th relay to the destination $g_j$ are all assumed to be independent and identically distributed complex Gaussian random variables with zero mean and unit variance. Moreover, we assume symbol synchronization among all the nodes.

Every transmission cycle from source to destination comprises of two stages. In the first stage, the source transmits a $T(T\geq R)$ length vector $\sqrt{\pi_1P}s$ which the relays receive. Here, $P$ denotes the total power spent by all the relays and the source and $\pi_1$ is the power allocation factor denoting the fraction of $P$ spent by the source. The received vector at the $j$th relay node is then given by  
\begin{equation}
r_j=\sqrt{\pi_1P}f_js+v_j,\ \mathrm{with}\ v_j\sim\mathcal{CN}(0,I_T).
\end{equation} 
In the second half of the cycle, all the relay nodes are scheduled to transmit together. The $j$th relay node transmits a $T$ length vector $t_j$ which is a function of $r_j$. The relays are only allowed to linearly process the received vector $r_j$ or its conjugate $r_j^*$. To be precise, the $j$th relay node is equipped with a $T\times T$ unitary matrix $A_j$ (called relay matrix) and it transmits  $t_j=\sqrt{\frac{\pi_2P}{\pi_1P+1}}A_jr_j$ or $t_j=\sqrt{\frac{\pi_2P}{\pi_1P+1}}A_jr_j^*$. Without loss of generality, we may assume that $M$ relays linearly process $r_j$ and the remaining $R-M$ relays linearly process $r_j^*$. The quantity $\pi_2$ is the power allocation factor denoting the fraction of $P$ spent by a relay. The vector received at the destination after $2T$ time slots is given by

\begin{equation}
y=\sum_{j=1}^{R}g_jt_j+w=\sqrt{\frac{\pi_1\pi_2P^2}{\pi_1P+1}}XH+N
\end{equation}
\noindent
where,\\ $H=\left[\begin{array}{ccccccc}f_1g_1 & f_2g_2 & \dots f_Mg_M & f_{M+1}^*g_{M+1} & \dots & f_Rg_R\end{array}\right]^T$,\\ $N=\sqrt{\frac{\pi_2P}{\pi_1P+1}}\left(\sum_{j=1}^{M}g_jA_jv_j+\sum_{j=M+1}^{R}g_jA_jv_j^*\right)+w$,\\ $X=\left[\begin{array}{ccccccc}A_1s & \dots & A_Ms & A_{M+1}s^* & \dots & A_Rs^*\end{array}\right]$ and $w\sim\mathcal{CN}(0,I_T)$.

The differential encoding is performed at the source as follows. A known vector $s_0$ is transmitted by the source in the first cycle. The transmitted vector at the $t$-th cycle is then given as follows
\begin{equation}
s_t=\frac{1}{a_{t-1}}U_ts_{t-1}
\end{equation}
\noindent
where, $U_t\in\mathscr{C}$ is the codeword containing the information at the $t$th cycle which satisfies $U_t^HU_t=a_t^2I_T,\ a_t\in\mathbb{R}$. The originally proposed coding strategies in \cite{KiR2,JiJ,OgH2} force $a_t=1$. The received vector at the destination in the $t$-th cycle can be written as
\begin{equation}
y_t=\sqrt{\frac{\pi_1\pi_2P^2}{\pi_1P+1}}X_tH_t+N_t
\end{equation}
\noindent
where,\\
$X_t=\left[\begin{array}{ccccccc}A_1s_t & A_2s_t & \dots & A_Ms_t & A_{M+1}s_t^* & \dots & A_Rs_t^*\end{array}\right]$. If
\begin{equation}
\begin{array}{rcl}
A_iU_t&=&U_tA_i,\ \forall\ U_t\in\mathscr{C}, i=1,\dots,M,\\
A_iU_t^*&=&U_tA_i,\ \forall\ U_t\in\mathscr{C}, i=M+1,\dots,R
\end{array}
\end{equation}
then, substituting for $s_t$ we have
\begin{equation}
\begin{array}{rcl}
X_t&=&\frac{1}{a_{t-1}}U_tX_{t-1}.
\end{array}
\end{equation}
If the channel remains approximately constant for more than $4T$ channel uses, then we can assume that $H_t=H_{t-1}$. Thus $y_t$ can be expressed as
\begin{equation}
\begin{array}{rl}
y_t&=\sqrt{\frac{\pi_1\pi_2P^2}{\pi_1P+1}}X_tH_t+N_t\\
&=\sqrt{\frac{\pi_1\pi_2P^2}{\pi_1P+1}}\frac{1}{a_{t-1}}U_tX_{t-1}H_{t-1}+N_t\\
&=\frac{1}{a_{t-1}}U_ty_{t-1}+\hat{N}_t
\end{array}
\end{equation}
where, $\hat{N}_t=-\frac{1}{a_{t-1}}U_tN_{t-1}+N_t$. Now, we can decode the codeword $U_t$ as follows
\begin{equation}
\label{eqn_metric0}
\hat{U_t}=\mathrm{arg} \min_{U_t\in\mathscr{C}}\parallel y_t-\frac{1}{a_{t-1}}U_ty_{t-1}\parallel^2
\end{equation}
where, $a_{t-1}$ can be estimated from the previous decision. Note that the above decoder is not a Maximum-Likelihood (ML) decoder. However, for the colocated MIMO case, it has been shown in \cite{TaC}-\cite{YGT2} that the performance loss is negligible. In this setup, we shall call $\mathscr{C}$ to be a DDSTC in order to distinguish it from DSTCs for colocated MIMO systems. We shall choose $\mathscr{C}$ to be a linear STBC in order to reap the benefits of low encoding and decoding complexity. However, the problem is not straightforward since we need to satisfy few more additional conditions. These are illustrated in the following subsection more precisely.

\section{Problem Statement}
\label{sec3}

\begin{defn} A linear design $S(x_1,x_2,\dots,x_K)$ in $K$ real indeterminates or variables $x_1,x_2,\dots,x_K$ is a $n\times n$ matrix with entries being a complex linear combination of the variables. More precisely, it can be written as follows,
$$
S(x_1,x_2,\dots,x_K)=\sum_{i=1}^{K}x_iB_i
$$
where, $B_i\in \mathbb C^{n\times n}$ are called the weight matrices. A linear STBC $\mathscr{C}$ is a finite set of $n\times n$ complex matrices which can be obtained by taking a linear design $S(x_1,x_2,\dots,x_K)$ and specifying a signal set $\mathscr{A}\subset\mathbb{R}^{K}$ from which the information vector $X=\left[\begin{array}{cccc}x_1 & x_2 & \dots & x_K\end{array}\right]^T$ take values from, with the additional condition that $S(a)\neq S(a'), \forall\ a\neq a'\in\mathscr{A}$. A linear STBC $\mathscr{C}=\left\{S(X)|X\in \mathscr{A}\right\}$ is said to be $g$-group encodable (or $\frac{K}{g}$ real symbol encodable or $\frac{K}{2g}$ complex symbol encodable) if $g$ divides $K$ and if $\mathscr{A}=\mathscr{A}_1\times\mathscr{A}_2\times\dots\times\mathscr{A}_g$ where each $\mathscr{A}_i,i=1,\dots,g\subset\mathbb{R}^{\frac{K}{g}}$.
\end{defn} 

Suppose we partition the set of weight matrices of $S(X)$ into $g$-groups, the $k$-th group containing $K/g$ matrices and also the information symbol vector as 
$$X=\left[\begin{array}{cccc}X_1^T X_2^T \dots X_g^T\end{array}\right]^T$$
where 
$$X_k=\left[\begin{array}{cccc}x_{\frac{(k-1)K}{g}+1} & x_{\frac{(k-1)K}{g}+1} & \dots & x_{\frac{kK}{g}}\end{array}\right]^T.$$ 
Now $S(X)$ can be written as,
$$
S(X)=\sum_{k=1}^{g}S_k(X_k); \quad S_k(X_k)=\sum_{i=\frac{(k-1)K}{g}+1}^{\frac{kK}{g}}x_iB_i.
$$
Minimizing the decoding metric corresponding to \eqref{eqn_metric0}
\begin{equation}
\label{eqn_metric}
\parallel y_t-\frac{1}{a_{t-1}}S(X)y_{t-1}\parallel^2
\end{equation}
is in general not same as minimizing
\begin{equation}
\label{eqn_submetric}
\parallel y_t-\frac{1}{a_{t-1}}S_k(X_k)y_{t-1}\parallel^2
\end{equation}
for each $1\leq k\leq g$ individually. However, if it so happens then the decoding complexity is reduced by a large amount. Note that it is not possible to compute \eqref{eqn_submetric} unless the code is $g$-group encodable also.

\begin{defn}
\label{defn2}
A linear STBC $\mathscr{C}=\left\{S(X)|X\in \mathscr{A}\right\}$ is said to be $g$-group decodable (or $\frac{K}{g}$ real symbol decodable or $\frac{K}{2g}$ complex symbol decodable) if it is $g$-group encodable and if its decoding metric in \eqref{eqn_metric} can be simplified as in \eqref{eqn_submetric}.
\end{defn}

The DDSTC design problem is then to design a $g$-group decodable linear STBC 
$$\mathscr{C}=\left\{S(X=\left[\begin{array}{cccc}x_1 & x_2 & \dots & x_K\end{array}\right])|X\in\mathscr{A}\right\}$$
such that
\begin{enumerate}
\item All codewords are scaled unitary matrices respecting the transmit power constraint
\item $K$ and $g$ are maximized.
\item \label{cond_DDSTC4} There exist $R$ unitary matrices $A_1,A_2,\dots,A_R$ of size $T\times T$ such that the first $M$ of them satisfy $A_iC=CA_i,\ i=1,\dots,M,\ \forall\ C\in\mathscr{C}$ and the remaining $R-M$ of them satisfy $A_iC^*=CA_i,\ i=M+1,\dots,R,\ \forall\ C\in\mathscr{C}$.
\item \label{cond_DDSTC5} There exists an initial vector $s_0$ such that the initial matrix $X_0=\left[\begin{array}{ccccccc}A_1s_0 & A_2s_0 & \dots & A_Ms_0 & A_{M+1}s_0^* & \dots & A_Rs_0^*\end{array}\right]$ is unitary.
\item $\min_{S_1,S_2\in\mathscr{C}}|S_1-S_2|$ is maximized.
\end{enumerate}

Observe that the requirements for designing DDSTCs are more restrictive than that for DSTCs. Note that condition \ref{cond_DDSTC4} and condition \ref{cond_DDSTC5} were not required for designing DSTCs. As an additional requirement it would be nice to have a single design $S(x_1,x_2,\dots,x_K)$ and a family of signal sets, one for each transmission rate such that all the required conditions are met. This means that we need to be able to find $R$ relay matrices satisfying the required conditions irrespective of the size of the code $|\mathscr{C}|$.
\subsection{DSTCs from extended Clifford algebras}
\label{sec3a}
In this subsection, we briefly describe the constructions of a class of rate one, linear designs satisfying the conditions for four-group decodability which were obtained using extended Clifford algebras in \cite{RaRisit3}. This algebraic framework is needed for us to be able to prove that the conditions 3) and 4) for DDSTCs are satisfied by codes arising out of these linear designs. Signal sets which lead to full diversity for these linear designs are provided in \cite{RaRisit1}.

\begin{defn}
Let $L=2^a,a\in \mathbb{N}$. An Extended Clifford algebra denoted by $\mathbb{A}_n^L$ is the associative algebra over $\mathbb{R}$~ generated by $n+a$ objects $\gamma_k,\ k=1,\dots,n$ and $\delta_i,\ i=1,\dots,a$ which satisfy the following relations:
\begin{itemize}
\item $\gamma_k^2=-1,\ \forall\ k=1,\dots,n$
\item $\gamma_k\gamma_j=-\gamma_j\gamma_k,\ \forall\ k\neq j$
\item $\delta_k^2=1,\ \forall k=1,\dots,a$
\item $\delta_k\delta_j=\delta_j\delta_k,\ \forall\ 1\leq k,j\leq a$
\item $\delta_k\gamma_j=\gamma_j\delta_k,\ \forall\ 1\leq k\leq a, 1\leq j\leq n$
\end{itemize}
\end{defn}

It is clear that the classical Clifford algebra, denoted by $Cliff_n$, is obtained when only the first two relations are satisfied and there are no $\delta_i$. $Cliff_n$ is a sub-algebra of $\mathbb{A}_n^L$. Let $\mathscr{B}_n$ be the natural $\mathbb{R}$-basis for this sub-algebra. Then a natural $\mathbb{R}$-basis for $\mathbb{A}_n^L$ is
\begin{equation}
\begin{array}{rl}
\mathscr{B}_n^L=&\mathscr{B}_n\cup\left\{\mathscr{B}_n\delta_i|i=1,\dots,a\right\}\\
&\bigcup_{m=2}^{a}\mathscr{B}_n\left\{\prod_{i=1}^{m}\delta_{k_i}|1\leq k_i\leq k_{i+1}\leq a\right\}
\end{array}
\end{equation}
\noindent
where
\begin{equation}
\begin{array}{rl}
\mathscr{B}_n=&\left\{1\right\}\bigcup\left\{\gamma_i|i=1,\dots,n\right\}\\
&\bigcup_{m=2}^{n}\left\{\prod_{i=1}^{m}\gamma_{k_i}|1\leq k_i\leq k_{i+1}\leq n\right\}
\end{array}
\end{equation}
A unitary matrix representation for the symbols $1,\gamma_1,\gamma_2,\gamma_1\gamma_2,\delta_k,k=1,\dots,a,\bigcup_{m=2}^{a}\prod_{i=1}^{m}\delta_{k_i}|1\leq k_i\leq k_{i+1}\leq a$ in the algebra $\mathbb{A}_2^L$ is needed \cite{RaRisit3} to construct linear designs which are four-group decodable. We briefly explain the matrix representation procedure and then illustrate it with few example.

We first view $\mathbb{A}_2^L$ as a vector space over $\mathbb{C}$ by thinking of $\gamma_1$ as the complex number $i=\sqrt{-1}$. A natural $\mathbb{C}$-basis for $\mathbb{A}_2^L$ is given by
\begin{equation}
\begin{array}{rl}
\mathcal{B}_n^L=&\left\{1,\gamma_2\right\}\cup\left\{\left\{1,\gamma_2\right\}\delta_i|i=1,\dots,a\right\}\\
& \bigcup_{m=2}^{a}\left\{1,\gamma_2\right\}\left\{\prod_{i=1}^{m}\delta_{k_i}|1\leq k_i\leq k_{i+1}\leq a\right\}.
\end{array}
\end{equation}
Thus the dimension of $\mathbb{A}_2^L$ seen as a vector space over $\mathbb{C}$ is $2^{n+a-1}$.

We have a natural embedding of  $\mathbb{A}_2^L$ into $\mathrm{End}_{\mathbb{C}}(\mathbb{A}_2^L),$ (the set of all ${\mathbb C}$-linear maps from $\mathbb{A}_2^L$ to itself) given by left multiplication as shown below.

\begin{equation*}
\begin{array}{l}
\phi:\mathbb{A}_2^L\mapsto \mathrm{End}_{\mathbb{C}}(\mathbb{A}_2^L) \\
\phi(x)=L_x:y\mapsto xy.
\end{array}
\end{equation*}

Since $L_x$ is $\mathbb{C}$-linear, we can write down a matrix representation of $L_x$ with respect to the natural $\mathbb{C}$-basis $\mathcal{B}_n^L$. Left regular representation naturally yields unitary matrix representations for the required symbols in the algebra. The resulting designs are $4$-group decodable \cite{RaRisit3}.

The design for $4$ relays  is
\begin{equation} 
S=\left[\begin{array}{ccrr}
s_1 & s_2 & -s_3^* & -s_4^*\\ 
s_2 & s_1 & -s_4^* & -s_3^*\\
s_3 & s_4 & s_1^* & s_2^*\\
s_4 & s_3 & s_2^* & s_1^*
\end{array}
\right] 
\end{equation} 
and the design for $8$ relays is 
\begin{equation} \label{eqn_design8} \left[\begin{array}{ccccrrrr}
s_1 & s_2 & s_3 & s_4 & -s_5^* & -s_6^* & -s_7^* & -s_8^*\\
s_2 & s_1 & s_4 & s_3 & -s_6^* & -s_5^* & -s_8^* & -s_7^*\\
s_3 & s_4 & s_1 & s_2 & -s_7^* & -s_8^* & -s_5^* & -s_6^*\\
s_4 & s_3 & s_2 & s_1 & -s_8^* & -s_7^* & -s_6^* & -s_5^*\\
s_5 & s_6 & s_7 & s_8 & s_1^* & s_2^* & s_3^* & s_4^*\\
s_6 & s_5 & s_8 & s_7 & s_2^* & s_1^* & s_4^* & s_3^*\\
s_7 & s_8 & s_5 & s_6 & s_3^* & s_4^* & s_1^* & s_2^*\\
s_8 & s_7 & s_6 & s_5 & s_4^* & s_3^* & s_2^* & s_1^* \end{array}\right]
\end{equation} 
and for the 16 relays  we get the $16\times 16$ design shown in \eqref{eqn_16} at the top of the next page. The partitioning of the real variables of the design into four groups is provided in \cite{RaRisit1}.

\begin{figure*}
{\normalsize
\begin{equation}
\label{eqn_16}
S=\left[\begin{array}{ccccccccrrrrrrrr}
s_1 & s_2 & s_3 & s_4 & s_5 & s_6 & s_7 & s_8 & -s_{9}^* & -s_{10}^* & -s_{11}^* & -s_{12}^* & -s_{13}^* & -s_{14}^* & -s_{15}^* & -s_{16}^*\\
s_2 & s_1 & s_4 & s_3 & s_6 & s_5 & s_8 & s_7 & -s_{10}^* & -s_{9}^* & -s_{12}^* & -s_{11}^* & -s_{14}^* & -s_{13}^* & -s_{16}^* & -s_{15}^*\\
s_3 & s_4 & s_1 & s_2 & s_7 & s_8 & s_5 & s_6 & -s_{11}^* & -s_{12}^* & -s_{9}^* & -s_{10}^* & -s_{15}^* & -s_{16}^* & -s_{13}^* & -s_{14}^*\\
s_4 & s_3 & s_2 & s_1 & s_8 & s_7 & s_6 & s_5 & -s_{12}^* & -s_{11}^* & -s_{10}^* & -s_{9}^* & -s_{16}^* & -s_{15}^* & -s_{14}^* & -s_{13}^*\\
s_5 & s_6 & s_7 & s_8 & s_1 & s_2 & s_3 & s_4 & -s_{13}^* & -s_{14}^* & -s_{15}^* & -s_{16}^* & -s_{9}^* & -s_{10}^* & -s_{11}^* & -s_{12}^*\\
s_6 & s_5 & s_8 & s_7 & s_2 & s_1 & s_4 & s_3 & -s_{14}^* & -s_{13}^* & -s_{16}^* & -s_{15}^* & -s_{10}^* & -s_{9}^* & -s_{12}^* & -s_{11}^*\\
s_7 & s_8 & s_5 & s_6 & s_3 & s_4 & s_1 & s_2 & -s_{15}^* & -s_{16}^* & -s_{13}^* & -s_{14}^* & -s_{11}^* & -s_{12}^* & -s_{9}^* & -s_{10}^*\\
s_8 & s_7 & s_6 & s_5 & s_4 & s_3 & s_2 & s_1 & -s_{16}^* & -s_{15}^* & -s_{14}^* & -s_{13}^* & -s_{12}^* & -s_{11}^* & -s_{10}^* & -s_{9}^*\\
s_9 & s_{10} & s_{11} & s_{12} & s_{13} & s_{14} & s_{15} & s_{16} & s_1^* & s_2^* & s_3^* & s_4^* & s_5^* & s_6^* & s_7^* & s_8^*\\
s_{10} & s_9 & s_{12} & s_{11} & s_{14} & s_{13} & s_{16} & s_{15} & s_2^* & s_1^* & s_4^* & s_3^* & s_6^* & s_5^* & s_8^* & s_7^*\\
s_{11} & s_{12} & s_9 & s_{10} & s_{15} & s_{16} & s_{13} & s_{14} & s_3^* & s_4^* & s_1^* & s_2^* & s_7^* & s_8^* & s_5^* & s_6^*\\
s_{12} & s_{11} & s_{10} & s_9 & s_{16} & s_{15} & s_{14} & s_{13} & s_4^* & s_3^* & s_2^* & s_1^* & s_8^* & s_7^* & s_6^* & s_5^*\\
s_{13} & s_{14} & s_{15} & s_{16} & s_9 & s_{10} & s_{11} & s_{12} & s_5^* & s_6^* & s_7^* & s_8^* & s_1^* & s_2^* & s_3^* & s_4^*\\
s_{14} & s_{13} & s_{16} & s_{15} & s_{10} & s_9 & s_{12} & s_{11} & s_6^* & s_5^* & s_8^* & s_7^* & s_2^* & s_1^* & s_4^* & s_3^*\\
s_{15} & s_{16} & s_{13} & s_{14} & s_{11} & s_{12} & s_9 & s_{10} & s_7^* & s_8^* & s_5^* & s_6^* & s_3^* & s_4^* & s_1^* & s_2^* \\
s_{16} & s_{15} & s_{14} & s_{13} & s_{12} & s_{11} & s_{10} & s_9 & s_8^* & s_7^* & s_6^* & s_5^* & s_4^* & s_3^* & s_2^* & s_1^*
\end{array}\right]
\end{equation}
}
\hrule
\end{figure*}
\section{Explicit construction of DDSTCs}
\label{sec4}
In this section, we shall prove that the additional requirements are met by the constructed codes in \cite{RaRisit1}. 

\begin{thm} 
The extra conditions (conditions \ref{cond_DDSTC4}) and \ref{cond_DDSTC5}) of the DDSTC design problem stated immediately after Definition \ref{defn2}) are met by the designs from extended Clifford algebras.
\end{thm}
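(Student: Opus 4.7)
The plan is to exhibit the relay matrices $A_1,\dots,A_R$ and the initial vector $s_0$ explicitly in terms of the algebraic ingredients of $\mathbb{A}_2^L$ (with $L=2^a$ and $R=T=2^{a+1}$), and then to verify the two conditions by block computations.

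The key structural input is the observation that the $T\times T$ designs arising from the left regular representation of $\mathbb{A}_2^L$ admit the $2\times 2$ block form
\[
S \;=\; \begin{pmatrix} L_w & -L_u^* \\ L_u & L_w^* \end{pmatrix},
\]
where $w,u$ lie in the commutative sub-algebra $\mathcal{D}=\mathbb{C}[\delta_1,\dots,\delta_a]$ of $\mathbb{C}$-dimension $2^a$, and $L_v$ denotes left multiplication by $v\in\mathcal{D}$ represented in the natural basis $\{\delta^{\alpha}:\alpha\in\{0,1\}^a\}$. The complex symbols $s_k$ split as coefficients of the $\delta$-basis monomials in $w$ and $u$. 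This decomposition is immediate for the $4\times 4$ design, is visible in the displayed $8\times 8$ and $16\times 16$ designs, and in full generality is what the construction in \cite{RaRisit3} produces.

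With $M=2^a$, I then take the $R=2^{a+1}$ relay matrices in two families,
\[
A_i \;=\; I_2 \otimes L_{\delta^{\alpha_i}}\quad(i=1,\dots,M),\qquad A_{M+j}\;=\;J\otimes L_{\delta^{\alpha_j}}\quad(j=1,\dots,2^a),
\]
where $J=\bigl(\begin{smallmatrix}0&-1\\1&0\end{smallmatrix}\bigr)$ and $\alpha_1,\dots,\alpha_{2^a}$ enumerates $\{0,1\}^a$. Each $L_{\delta^{\alpha}}$ is a real permutation matrix, so every $A_i$ is unitary. The verification of Condition \ref{cond_DDSTC4} is then a routine $2\times 2$ block calculation: for $i\le M$, $A_iS=SA_i$ because $L_{\delta^{\alpha_i}}$ commutes with $L_w$ and $L_u$ (since $\mathcal{D}$ is commutative) and also with $L_w^*$ and $L_u^*$ (since $L_{\delta^{\alpha_i}}$ has real entries); for $i>M$, the identity $A_iS^*=SA_i$ follows from the same commutations together with the way $J$ swaps the two block-rows of $S$ into those of $S^*$. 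For Condition \ref{cond_DDSTC5}, I take $s_0$ to be the first standard basis vector of $\mathbb{C}^T$ (the one representing $1\in\mathbb{A}_2^L$); then $s_0^*=s_0$ and the column $A_is_0$ of $X_0$ is the basis vector representing $\delta^{\alpha_i}$ (for $i\le M$) or $\gamma_2\delta^{\alpha_{i-M}}$ (for $i>M$), so the $T$ columns of $X_0$ are all $T$ distinct standard basis vectors and $X_0$ is a permutation matrix, hence unitary.

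The main obstacle is establishing the structural identification in the first step, namely realizing the displayed designs as $2\times 2$ blocks of left regular representations of $\mathcal{D}$. Once that is granted, both commutation conditions and the unitarity of $X_0$ reduce to formal facts about $\mathcal{D}$ being commutative and the $L_{\delta^{\alpha}}$ being real permutation matrices. Rigorously checking the block decomposition for general $a$ requires carefully following how the left regular representation of $\mathbb{A}_2^L$ is built from that of $\mathcal{D}$ via the Alamouti-type doubling coming from $\gamma_2$, which anticommutes with $\gamma_1=i$ and is responsible for the conjugated entries $L_w^*,L_u^*$ on the block anti-diagonal.
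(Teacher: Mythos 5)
Your proposal is correct and follows essentially the same route as the paper: the relay matrices you write as $I_2\otimes L_{\delta^{\alpha_i}}$ and $J\otimes L_{\delta^{\alpha_j}}$ are exactly the matrix representations $\phi(\delta^{\alpha_i})$ and $\phi(\gamma_2\delta^{\alpha_j})$ that the paper chooses, and you use the same initial vector $s_0=e_1$. The only difference is presentational --- the paper verifies the commutation conditions abstractly (the $\delta$'s are central, and $\gamma_2\bar{x}=x\gamma_2$ via the anticommutation $\gamma_2(-\gamma_1)=\gamma_1\gamma_2$, with conjugation of the representation corresponding to $\gamma_1\mapsto-\gamma_1$), whereas you carry out the equivalent check by explicit $2\times2$ block computation after identifying the block structure of the left regular representation.
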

\begin{proof} 
We prove the existence of the $R$ relay matrices by explicitly constructing them. For this purpose, we shall use the fact that the code for $R=2^\lambda$ relays was obtained as a matrix representation of the Extended Clifford algebra $\mathbb{A}_2^{2^{\lambda-1}}$. Thus $a=\lambda-1$. We choose $M=2^a=\frac{R}{2}$. The $M$ relay matrices are given by the union of the elements of the sets

{\small
$$\left\{\phi(1),\phi(\delta_1),\dots,\phi(\delta_a)\right\},~\left\{\bigcup_{m=2}^{a}\prod_{i=1}^{m}\phi(\delta_{k_i})|1\leq k_i\leq k_{i+1}\leq a\right\}.$$ 
}

By virtue of the property that the map $\phi$ is a ring homomorphism, these matrices are guaranteed to commute with all the codewords because they are matrix representations of elements belonging to the center of the algebra $\mathbb{A}_2^{2^{\lambda-1}}$. To obtain the remaining $R-M$ relay matrices, we need to find unitary matrices which satisfy
\begin{equation}
\label{eqn_skewcommute}
A_iC^*=CA_i,i=M+1,\dots,R
\end{equation}
where $C$ is any codeword. But the codeword $C$ is a matrix representation of some element belonging to the Extended Clifford algebra. One method to get these relay matrices is to take them from within the Extended Clifford algebra itself. By doing so, we can translate the condition in \eqref{eqn_skewcommute} into a condition on elements of the algebra which will provide us a handle on the problem. Towards that end, we first need to identify a map in the algebra which is the analogue of taking the conjugate of the  matrix representation of an element. Recall that in Subsection \ref{sec3a}, we used the fact that $\gamma_1$ can be thought of as the complex number $i=\sqrt{-1}$. Note that when we take the conjugate of a matrix, we  simply replace $i$ by $-i$. Hence the analogue of this action in the algebra is to replace $\gamma_1$ by $-\gamma_1$. Thus, we define the analogous map $\sigma$ in the algebra as follows:
\begin{equation}
\sigma:x\mapsto\bar{x}
\end{equation}
where, the element $\bar{x}$ is obtained from $x$ by simply replacing $\gamma_1$ by $-\gamma_1$ in the expression of $x$ in terms of the natural ${\mathbb R}$-basis of Extended Clifford algebra. Now the problem is to find $R-M$ distinct elements denoted by $a_i,i=M+1,\dots,R$ of the algebra $\mathbb{A}_2^{2^{\lambda-1}}$ which satisfy
\begin{equation}
\label{eqn19}
a_i\bar{x}=xa_i,\ \forall x\in\mathbb{A}_2^{2^{\lambda-1}}.
\end{equation}
The elements 
{\small
$$\left\{\gamma_2\left\{1,\delta_1,\dots,\delta_a\right\}\right\},~ \left\{\gamma_2\left\{\bigcup_{m=2}^{a}\prod_{i=1}^{m}\phi(\delta_{k_i})|1\leq k_i\leq k_{i+1}\leq a\right\}\right\}$$ 
}

\noindent
satisfy the above required condition. This can be easily proved by using the fact that $\gamma_2(-\gamma_1)=\gamma_1(\gamma_2)$(anti-commuting property). Hence the required unitary relay matrices $A_i,i=M+1,\dots,R$ can be obtained by taking the matrix representation of these specific elements. If we plug in these relay matrices to form a design we get 
$$X=\left[\begin{array}{cccccc}A_1s & \dots & A_Ms & A_{M+1}s^* & \dots & A_Rs^*\end{array}\right],$$ 
where $s=\left[\begin{array}{cccc}x_1 & x_2 & \dots & x_R\end{array}\right]^T$ is the vector of complex information symbols. Then, we get exactly the same design which is used at the source. Because of this we can choose the initial vector $s_0=\left[\begin{array}{cccc}1 & 0 & \dots & 0\end{array}\right]^T$ which then guarantees that the initial matrix is also unitary.
\end{proof}

However, we would like to point out that there are also other elements of the algebra which satisfy the requirements \eqref{eqn19}. For example consider the union of the elements of the sets

{\footnotesize
$$\left\{\gamma_1\gamma_2\left\{1,\delta_1,\dots,\delta_a\right\}\right\},~ \left\{\gamma_1\gamma_2\left\{\bigcup_{m=2}^{a}\prod_{i=1}^{m}\phi(\delta_{k_i})|1\leq k_i\leq k_{i+1}\leq a\right\}\right\}.$$
}

\begin{eg}
Let $R=4$. Then the DDSTC $\mathscr{C}$ is obtained using the design
$$
\left[\begin{array}{rrrr}
s_1 & s_2 & -s_3^* & -s_4^*\\
s_2 & s_1 & -s_4^* & -s_3^*\\
s_3 & s_4 & s_1^* & s_2^*\\
s_4 & s_3 & s_2^* & s_1^*
\end{array}\right]
$$
and the signal set constructed in \cite{RaRisit1}. The signal set is a Cartesian product of four $2$-dimensional signal sets. The relay matrices are given as follows:
$$
\begin{array}{rcc}
A_1&=&I_4\\
A_2&=&\left[\begin{array}{cccc}0 & 1 & 0 & 0\\
1 & 0 & 0 & 0\\
0 & 0 & 0 & 1\\
0 & 0 & 1 & 0\end{array}\right]\\
A_3&=&\left[\begin{array}{ccrr}0 & 0 & -1 & 0\\
0 & 0 & 0 & -1\\
1 & 0 & 0 & 0\\
0 & 1 & 0 & 0\end{array}\right]\\
A_4&=&\left[\begin{array}{ccrr}0 & 0 & 0 & -1\\
0 & 0 & -1 & 0\\
0 & 1 & 0 & 0\\
1 & 0 & 0 & 0\end{array}\right]
\end{array}
$$
The initial vector $s_0=\left[\begin{array}{cccc}1 & 0 & \dots & 0\end{array}\right]^T$ and thus the initial matrix $X_0=I_4$. This DDSTC is single complex decodable (or 2 real symbol decodable).
\end{eg}

\section{Discussion}
In this paper, we have constructed a class of four group decodable DDSTCs for $R=2^\lambda$ relays using extended Clifford algebras. Note that relaxing the unitary matrix codebook to scaled unitary matrix codebook has paved us the way to obtain the decoding complexity benefits. It would be interesting to know whether there are unitary matrix codebooks with the same decoding complexity, since then the decoder in \eqref{eqn_metric0} would correspond to an ML decoder. Maximizing the coding gain is also an interesting problem for further research.
\section*{Acknowledgment}
This work was supported through grants to B.S.~Rajan; partly by the
IISc-DRDO program on Advanced Research in Mathematical Engineering, and partly
by the Council of Scientific \& Industrial Research (CSIR, India) Research
Grant (22(0365)/04/EMR-II). The authors would like to thank Dr.Frederique Oggier and Dr.Yindi Jing for sending us the preprints of their recent works \cite{OgH1,JiJ,OgH2}. 

\end{document}